\theoremstyle{plain}
\newtheorem{theorem}{Theorem}[section]
\newtheorem{proposition}[theorem]{Proposition}
\theoremstyle{definition}
\newtheorem{definition}[theorem]{Definition}
\newtheorem{remark}[theorem]{Remark}
\title{An exhaustive ADDIS principle for online FWER control}
\author{
 Lasse Fischer \\
  Competence Center for Clinical Trials Bremen \\ University of Bremen \\
  \texttt{fischer1@uni-bremen.de} \\
   \And
 Marta Bofill Roig \\
  Center for Medical Data Science \\ Medical University of Vienna \\
  \texttt{marta.bofillroig@meduniwien.ac.at} \\
  \And
 Werner Brannath \\
  Competence Center for Clinical Trials Bremen \\ University of Bremen \\
  \texttt{brannath@uni-bremen.de} \\
}
\begin{document}
\maketitle
\begin{abstract}
In this paper we consider online multiple testing with familywise error rate (FWER) control, where the probability of committing at least one type I error shall remain under control while testing a possibly infinite sequence of hypotheses over time. Currently, Adaptive-Discard (ADDIS) procedures seem to be the most promising online procedures with FWER control in terms of power. Now, our main contribution is a uniform improvement of the ADDIS principle and thus of all ADDIS procedures. This means, the methods we propose reject as least as much hypotheses as ADDIS procedures and in some cases even more, while maintaining FWER control. In addition, we show that there is no other FWER controlling procedure that enlarges the event of rejecting any hypothesis. Finally, we apply the new principle to derive uniform improvements of the ADDIS-Spending and ADDIS-Graph.
\end{abstract}

\keywords{adaptivity \and discarding \and familywise error rate \and online multiple testing.}

\section{Introduction}
The decision in hypothesis testing is usually made by comparing a $p$-value with a prespecified significance level, e.g. $\alpha=0.05$. However, if not only one, but several hypotheses are tested, this will lead to an inflation of the type I error and some overall error control is needed. In life sciences, for example, it is often essential to avoid any type I error and therefore the probability of making at least one false discovery, called the familywise error rate (FWER), should be controlled under a previously defined level $\alpha \in (0,1)$. Classical multiple testing theory assumes that a  finite number of hypotheses $H_1,\ldots,H_m$, is predefined at the beginning of the evaluation \citep{bretz2016multiple}. A simple multiple testing procedure for controlling the FWER in this classical setting is, e.g.,  the Bonferroni correction, in which each individual hypothesis is tested at the level $\alpha / m$. In many contemporary trials, however, the hypothesis set grows over time and statistical inference is to be made with the already known hypotheses without knowing the future ones. In this paper, we focus on online multiple testing \citep{FS, JM}. Thereby, the hypotheses arrive one at a time and it must be decided on the current hypothesis while having access only to the previous hypotheses and decisions. Since the number of future hypotheses is also unknown in advance, it is usually assumed to be infinite.

Online multiple testing problems arise, for example, when public databases are used. Here, several researchers access the database at different times testing various hypotheses. In addition, many public databases grow over time as new data is collected, leading to the testing of further hypotheses \citep{robertson2022online}.
But comparatively smaller studies, such as specific platform trials, can also be formulated as an online multiple testing problem \citep{Retal}. Another interesting application is the sequential modification of a machine learning algorithm \citep{Fetal, FES}. Here, it is started with an initial model and at each step a hypothesis test is performed to decide whether a modification of the current model improves the performance. If the test is rejected, meaning the performance of the modified model is significantly larger, the modified model can be used as the new benchmark for future updates. In order to ensure with high probability that the benchmark model constantly improves, online FWER control is required.

\subsection{Problem formulation}
Let $(\Omega, \mathcal{A})$ be a measurable space and $\mathcal{P}$ be a set of probability distribution on $(\Omega, \mathcal{A})$. Unless otherwise stated, $\mathbb{P}\in \mathcal{P}$ denotes the true data generating distribution in the remainder of this paper. We have a sequence of null hypotheses $(H_i)_{i\in \mathbb{N}}$ about $\mathbb{P}$ with corresponding $p$-values $(P_i)_{i\in \mathbb{N}}$. The null $p$-values are assumed to be valid, meaning $\mathbb{P}(P_i\leq x)\leq x$ for all $x\in[0,1]$ and $i\in I_0$, where $I_0\subseteq \mathbb{N}$ is the index set of true null hypotheses. Note that this includes: uniformly distributed null $p$-values, where $\mathbb{P}(P_i\leq x)= x$ for all $x\in[0,1]$ and $i\in I_0$; and conservative null $p$-values, which means that $\mathbb{P}(P_i\leq x)< x$ for some $x\in [0,1]$. Let $V(i)$ denote the number of false rejections up to step $i\in \mathbb{N}$. The goal is to determine procedures that generate a sequence of individual significance levels $(\alpha_i)_{i\in \mathbb{N}}$, whereby each $\alpha_i$ is only allowed to depend on the previous $p$-values $P_1,\ldots,P_{i-1}$, such that 
\begin{align*}
    \text{FWER}(i)\coloneqq \mathbb{P}(V(i) > 0)  
\end{align*}
is controlled at level $\alpha\in (0,1)$ for each $i\in \mathbb{N}$.  $\mathbb{P}$ denotes the probability under the true configuration of true and false hypotheses. Since $\mathbb{P}(V(i) > 0)$ is increasing in $i$, it is sufficient to control $\text{FWER} \coloneqq \mathbb{P}(V > 0)$, where $V=\lim\limits_{i \to \infty} V(i)$. It is differentiated between \textit{strong} and \textit{weak} FWER control. Strong control provides that $\text{FWER}\leq \alpha$ under any configuration of true and false hypotheses, whereas weak control assumes that all null hypotheses are true. We focus on strong control. While controlling the FWER strongly, the \textit{power} should be maximised, where power is defined as the expected proportion of rejections among the false hypotheses.
In this paper, we aim for uniform improvements of existing procedures. By uniform improvement we mean that a procedure rejects as least as much hypotheses as the initial procedure and, in some cases, even more. This is the case, for example, when the newly constructed procedure tests the hypotheses at larger individual significance levels than the existing one for all data constellations. 

\subsection{Existing literature and contribution\label{sec:existing_lit}}
As a first FWER controlling online procedure, \cite{FS} introduced the Alpha-Spending as online version of the weighted Bonferroni, which sets the individual significance levels $(\alpha_i)_{i\in \mathbb{N}}$ such that $\sum_{i\in \mathbb{N}} \alpha_i \leq \alpha$. This method is conservative, meaning that there exist online FWER controlling procedures that uniformly improve Alpha-Spending. One approach to derive such improvements is the closure principle \citep{MPG}. One could either extend existing Bonferroni-based closed procedures to the online setting \citep{TR} or derive direct improvements of the Alpha-Spending via the online closure principle \citep{fischer2022online}. Most of these Alpha-Spending-based closed procedures can by summarised by choosing $(\alpha_i)_{i\in \mathbb{N}}$ such that $\sum_{i\in \mathbb{N}} \alpha_i (1-\mathbbm{1}_{P_i\leq \alpha_i}) \leq \alpha$. Here, if a hypothesis $H_i$ is rejected ($P_i\leq \alpha_i$), the significance level can be reused for future testing, which improves the classical Alpha-Spending. However, simulations have shown that these Alpha-Spending based closed procedures lead to low online power as well \citep{TR}. The problem is that the individual significance level $\alpha_i$ and thus the probability of reusing a significance level tend to zero for $i$ to $\infty$. A more promising approach is the ADaptive-DIScard (ADDIS) principle by \cite{TR}. The term ADDIS stems from the simultaneous discard of conservative null $p$-values based on a parameter $0<\tau_i\leq 1$ and adaption to the proportion of non-nulls ($p$-values corresponding to false null hypotheses) using a parameter $0\leq \lambda_i <\tau_i$.
 Precisely, an ADDIS procedure determines individual significance levels $(\alpha_i)_{i\in \mathbb{N}}$ such that 
\begin{align}
    \sum_{i\in \mathbb{N}} \frac{\alpha_i}{\tau_i - \lambda_i} (\mathbbm{1}_{P_i\leq \tau_i}-\mathbbm{1}_{P_i\leq \lambda_i}) \leq \alpha. \label{eq:addis_principle}
\end{align}
In contrast to Alpha-Spending-based closed procedures, ADDIS procedures allow to reuse the significance level $\alpha_i$ if either $P_i>\tau_i$ or $P_i\leq \lambda_i$. To correct for this improvement, the additional factor $1/(\tau_i-\lambda_i)$ needs to be included. Since null $p$-values are often conservative and thus tend to be large and non-null $p$-values tend to be small, we expect this trade-off to be useful. The interpretation of ADDIS procedures is as follows: Large $p$-values ($P_i>\tau_i$) are discarded, meaning not being tested, and small $p$-values ($P_i\leq \lambda_i$) are likely to be non-null and thus cannot lead to a type I error. However, to control the FWER, ADDIS procedures need additional assumptions. Firstly, the null $p$-values $(P_i)_{i\in I_0}$ need to be independent from each other and from the non-nulls. Secondly, in case of $\tau_i<1$, the null $p$-values are required to be uniformly valid, meaning $\mathbb{P}(P_i\leq xy|P_i\leq y)\leq x$ for all $x,y\in [0,1]$ and $i\in I_0$ \citep{ZSS}. In return,  ADDIS procedures lead to a high online power \citep{TR}. The ADDIS-Spending \citep{TR} and ADDIS-Graph \citep{fischer2023adaptive} were proposed as concrete ADDIS procedures satisfying the required conditions. 
 
 Although ADDIS procedures are quite powerful, they are based on the Bonferroni inequality \begin{align*}
\text{FWER}=P(V>0)=\mathbb{P}\left(\bigcup\limits_{i\in I_0} \{P_i \leq \alpha_i\} \right)  {\leq} \sum_{i\in I_0} \mathbb{P}({P_i \leq \alpha_i}), 
\end{align*}
 which leads to conservative procedures. It is known that under independence of $p$-values, the Bonferroni procedure can be uniformly improved by the Sidak correction \citep{vsidak1967rectangular}, which uses 
\begin{align} \text{FWER}=1-P(V=0)=1-\mathbb{P}\left(\bigcap\limits_{i\in I_0} \{P_i > \alpha_i\} \right)  = 1-\prod_{i\in I_0} \mathbb{P}({P_i \leq \alpha_i}). \label{eq:sidak}\end{align}
\cite{TR} have already attempted to apply the same idea to ADDIS procedures. However, since the individual significance level $\alpha_i$ of an ADDIS procedure uses information about the previous $p$-values, the events $\{P_i\leq \alpha_i\}$ are no longer independent of each other and thus the third equality in equation \eqref{eq:sidak} becomes an inequality \citep{TR}. This implies that the ADDIS-Sidak is conservative as well. Therefore, \cite{TR} left open the question of whether their ADDIS principle can be uniformly improved. In this paper, we will answer this question by introducing the exhaustive ADDIS principle, which provides a uniform improvement over the ADDIS principle by utilizing the independence of the p-values. In addition, we show that the there is no FWER controlling procedure that enlarges the event of rejecting any hypothesis further. 

\subsection{Overview of the paper}

In Section \ref{sec:addis_algorithm}, we introduce a general ADDIS algorithm that contains all other ADDIS procedures. Afterwards, we derive the exhaustive ADDIS algorithm as a uniform improvement of the ADDIS algorithm and show that the  event of rejecting any hypothesis cannot be further enlarged (Section \ref{sec:ex:addis_algorithm}). In Section \ref{sec:improved_addis_procedures}, this exhaustive ADDIS algorithm is used to obtain uniform improvements of the ADDIS-Spending and ADDIS-Graph. In Section \ref{sec:sim} and \ref{sec:real_data}, we quantify the performance of the constructed procedures by applying them on simulated and real data, respectively.  For the complete proofs of the theoretical results we refer to the Appendix and the \texttt{R} code for the simulations can be found at the GitHub repository \url{https://github.com/fischer23/Exhaustive-ADDIS-procedures}.


\section{The ADDIS algorithm\label{sec:addis_algorithm}}
In this section, we introduce a general ADDIS algorithm which encompasses all online procedures satisfying the ADDIS principle. This facilitates the interpretation of the ADDIS principle and conveniently introduces a notation needed to construct the uniform improvement in the next section.

In the most general form of the ADDIS principle, the parameters $\tau_i$ and $\lambda_i$ are allowed to depend on the previous $p$-values as well. Mathematically, $\alpha_i$, $\tau_i$ and $\lambda_i$ are random variables with values in $[0,1)$, $(0,1]$ and $[0,\tau_i)$ respectively that are measurable with respect to $\mathcal{G}_{i-1}\coloneqq \sigma(P_1,\ldots, P_{i-1})$. Furthermore, note that condition \eqref{eq:addis_principle} is equivalent to $\sum_{j=1}^{i} \frac{\alpha_j}{\tau_j-\lambda_j} (S_j-C_j) \leq \alpha$ for all $i\in \mathbb{N}$, where $S_i=\mathbbm{1}_{P_i\leq \tau_i}$ and $C_i=\mathbbm{1}_{P_i\leq \lambda_i}$. Since $\alpha_i$, $\tau_i$ and $\lambda_i$ are measurable with regard to $\mathcal{G}_{i-1}$, they must be fixed before knowing the true values of $S_i$ and $C_i$. Therefore, we need to make pessimistic assumptions at step $i\in \mathbb{N}$, meaning $S_i=1$ and $C_i=0$. Hence, condition \eqref{eq:addis_principle} is equivalent to \begin{align*}\sum_{j=1}^{i-1} \frac{\alpha_j}{\tau_j-\lambda_j} (S_j-C_j)+  \frac{\alpha_i}{\tau_i-\lambda_i}\leq \alpha \quad \forall i\in \mathbb{N}. \end{align*} With this, we can formulate a general ADDIS procedure, called ADDIS algorithm, that contains all other ADDIS procedures. 
 
 \begin{definition}[ADDIS algorithm\label{def:addis_algorithm}]\hphantom{1}
 \begin{enumerate}
 \setcounter{enumi}{-1}
 \item Before the study starts, choose $\tau_1\in (0,1]$, $\lambda_1 \in [0, \tau_1)$ and $\alpha_1 \in [0, \tau_1)$ such that \\ ${\alpha - \frac{\alpha_1 }{\tau_1 - \lambda_1} \geq 0}$.
 \item At step 1 reject $H_1$ if $P_1 \leq \alpha_1$ and set $\alpha^{(2)}=\alpha$ if $P_1 \leq \lambda_1$ or $P_1 > \tau_1$ and $\alpha^{(2)} = \alpha - \frac{\alpha_1 }{\tau_1-\lambda_1}$ if $ \lambda_1 < P_1 \leq \tau_1 $. Furthermore, choose for step $2$ some $\tau_2\in (0,1]$, $\lambda_2 \in [0, \tau_2)$ and $\alpha_2 \in [0, \tau_2)$ such that 
 $\alpha^{(2)} - \frac{\alpha_2 }{\tau_2-\lambda_2}\geq 0$. 
 \item[...]
 \item[i.] At step $i$ reject $H_i$ if $P_i \leq \alpha_i$ and set $\alpha^{(i+1)}=\alpha^{(i)}$ if $P_i \leq \lambda_i$ or $P_i > \tau_i$  and $\alpha^{(i+1)} = \alpha^{(i)} - \frac{\alpha_i }{\tau_i-\lambda_i}$ if $\lambda_i < P_i \leq \tau_i$. Furthermore, choose for step $i+1$ some $\tau_{i+1} \in (0,1]$, $\lambda_{i+1} \in [0, \tau_{i+1})$ and $\alpha_{i+1} \in [0, \tau_{i+1})$ such that 
 $\alpha^{(i+1)} - \frac{\alpha_{i+1}}{\tau_{i+1}-\lambda_{i+1}}\geq 0$.
 \end{enumerate}
 \end{definition}


For ease of understanding, we illustrate the ADDIS algorithm in Figure \ref{abb:addis_algorithm}. The chart includes the end of step $i-1$, i.e. the setting of $\tau_i$, $\lambda_i$ and $\alpha_i$, and the entire step $i$. The parameter $\alpha^{(i)}$, $i\in \mathbb{N}$, can be interpreted as the level that can (but  does not have to) be spent for the hypotheses $\{H_j:j\geq i\}$. Using Alpha-Spending one would set $\alpha^{(i+1)}=\alpha^{(i)}-\alpha_i$ for all $i\in \mathbb{N}$. However, as seen in Figure \ref{abb:addis_algorithm}, in the ADDIS algorithm the entire significance level is shifted to the future hypotheses ($\alpha^{(i+1)}=\alpha^{(i)}$) if  $P_i\leq \lambda_i$ or $P_i> \tau_i$ and in turn $\alpha^{(i+1)}=\alpha^{(i)}-\frac{\alpha_i}{\tau_i-\lambda_i}$ in the opposite case. 

 \begin{figure}[htbp]
 	\begin{center}
 			\centering
 		\includegraphics[width=19.5cm,height=6.5cm,keepaspectratio]{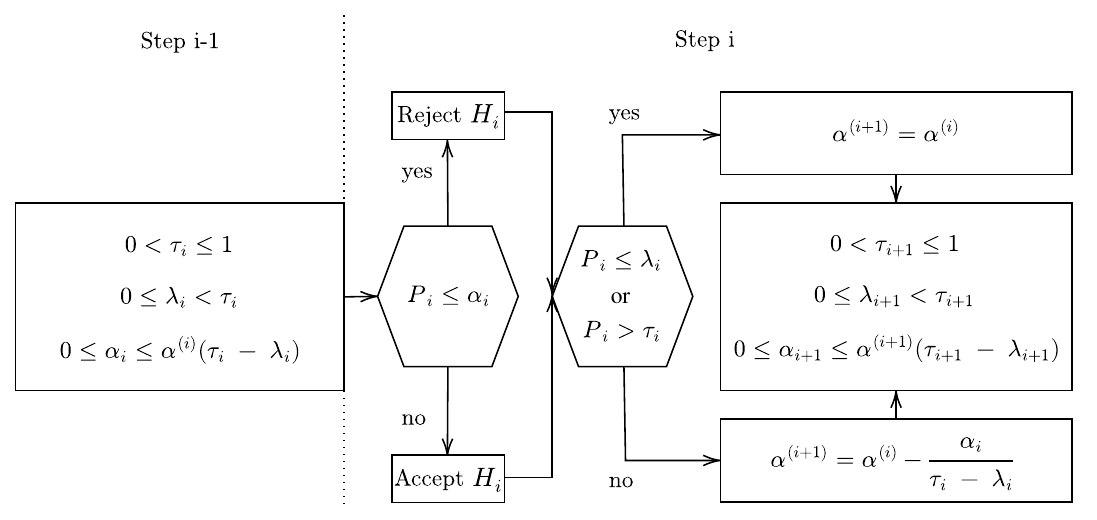}
 	\end{center}
 	\caption{Illustration of the ADDIS algorithm from the end of step $i-1$ to step $i$.\label{abb:addis_algorithm}}
 \end{figure}

\section{The exhaustive ADDIS algorithm\label{sec:ex:addis_algorithm}}

In this section, we introduce a uniform improvement of the ADDIS algorithm and show that it cannot be further improved. First, we illustrate the idea for the first two hypotheses and $\tau_1=1$. As in the ADDIS algorithm, we set $\alpha^{(2)}=\alpha^{(1)}=\alpha$ if $P_1\leq \lambda_1$. Now, under the global null hypothesis and the assumption of independent and uniformly distributed null $p$-values we can calculate
\begin{align*}
    \text{FWER}&=\mathbb{P}(P_1\leq \alpha_1 \cup P_2\leq \alpha_2) \\
    &= \mathbb{P}(P_1\leq \alpha_1 ) + \mathbb{P}(P_1 > \alpha_1, P_2\leq \alpha_2) \\
    &= \mathbb{P}(P_1\leq \alpha_1 ) +\mathbb{P}( P_2\leq \alpha_2| \alpha_1 < P_1 \leq \lambda_1) \mathbb{P}(  \alpha_1 < P_1 \leq \lambda_1) \\ &+ \mathbb{P}( P_2\leq \alpha_2|  P_1 > \lambda_1) \mathbb{P}( P_1 > \lambda_1) \\
    &\leq \mathbb{P}(P_1\leq \alpha_1 )+ \alpha \mathbb{P}(  \alpha_1 < P_1 \leq \lambda_1) + c \mathbb{P}( P_1 > \lambda_1),
\end{align*}
where $c$ denotes the value of $\alpha^{(2)}$ in case of $P_1>\lambda_1$. The last line follows from the independence of the null $p$-values and as we always choose $\alpha_2\leq \alpha^{(2)}$. Determining $c$ such that the last line equals $\alpha$ under the global null hypothesis and the assumption of uniformly distributed null $p$-values gives us $c=\alpha-\alpha_1\frac{1-\alpha}{1-\lambda}$.
This already indicates the uniform improvement over the ADDIS algorithm, as it sets $\alpha^{(2)}$ to the smaller value $\alpha^{(2)}=\alpha-\alpha_1\frac{1}{1-\lambda}\leq c$ in case of $P_1>\lambda_1$ (Figure \ref{abb:addis_algorithm}). In addition, in case of $\alpha_2=\alpha^{(2)}$ the last line in the above calculation becomes an equation and thus $\text{FWER}=\alpha$, which suggests that the procedure fully exhausts the significance level. In general, the \textit{exhaustive ADDIS algorithm} can be formulated as follows.

 \begin{definition}[Exhaustive ADDIS algorithm\label{def:ex_addis_algorithm}]\hphantom{1}
 \begin{enumerate}
 \setcounter{enumi}{-1}
 \item Before the study starts, choose $\tau_1\in (0,1]$, $\lambda_1 \in [\alpha \tau_1, \tau_1)$ and $\alpha_1 \in [0, \tau_1)$ such that \\ ${\alpha - \frac{\alpha_1(1-\alpha) }{\tau_1 - \lambda_1} \geq 0}$.
 \item At step 1 reject $H_1$ if $P_1 \leq \alpha_1$ and set $\alpha^{(2)}=\alpha$ if $P_1 \leq \lambda_1$ or $P_1 > \tau_1$ and $\alpha^{(2)} = \alpha - \frac{\alpha_1 (1-\alpha)}{\tau_1-\lambda_1}$ if $ \lambda_1 < P_1 \leq \tau_1 $. Furthermore, choose for step $2$ some $\tau_2\in (0,1]$, $\lambda_2 \in [\alpha^{(2)} \tau_2, \tau_2)$ and $\alpha_2 \in [0, \tau_2)$ such that 
 $\alpha^{(2)} - \frac{\alpha_2 (1-\alpha^{(2)})}{\tau_2-\lambda_2}\geq 0$. 
 \item[...]
 \item[i.] At step $i$ reject $H_i$ if $P_i \leq \alpha_i$ and set $\alpha^{(i+1)}=\alpha^{(i)}$ if $P_i \leq \lambda_i$ or $P_i > \tau_i$  and $\alpha^{(i+1)} = \alpha^{(i)} - \frac{\alpha_i (1-\alpha^{(i)})}{\tau_i-\lambda_i}$ if $\lambda_i < P_i \leq \tau_i$. Furthermore, choose for step $i+1$ some $\tau_{i+1} \in (0,1]$, $\lambda_{i+1} \in [\alpha^{(i+1)} \tau_{i+1}, \tau_{i+1})$ and $\alpha_{i+1} \in [0, \tau_{i+1})$ such that 
 $\alpha^{(i+1)} - \frac{\alpha_{i+1}(1-\alpha^{(i+1)})}{\tau_{i+1}-\lambda_{i+1}}\geq 0$.
 \end{enumerate}
 \end{definition}

  For proving that the exhaustive ADDIS algorithm controls the FWER, we use a backward induction similar to the calculation at the beginning of this section.
 \begin{theorem}\label{theo:strong_control}
 The exhaustive ADDIS algorithm controls the $\text{FWER}$ in the strong sense when the null $p$-values are uniformly valid, independent from each other and independent from the non-null p-values.\end{theorem}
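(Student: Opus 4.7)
The plan is to prove the finite-horizon, conditional invariant
\begin{align*}
\mathbb{P}\!\left(\bigcup_{j=i}^{n}\{j\in I_0,\,P_j\leq \alpha_j\}\,\bigg|\,\mathcal{G}_{i-1}\right)\leq \alpha^{(i)} \quad\text{a.s.},\qquad i\in\{1,\ldots,n+1\},
\end{align*}
by backward induction on $i$ for an arbitrary fixed $n$, and then recover strong FWER control by evaluating at $i=1$ (where $\mathcal{G}_0$ is trivial and $\alpha^{(1)}=\alpha$) and sending $n\to\infty$ via monotone convergence. The base case $i=n+1$ is immediate since the union is empty and $\alpha^{(n+1)}\geq 0$.

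For the induction step I split on whether $i\in I_0$. When $i\notin I_0$, the union over $j\geq i$ equals that over $j\geq i+1$, and the bound follows by conditioning on $\mathcal{G}_i$, applying the inductive hypothesis, and using the pathwise inequality $\alpha^{(i+1)}\leq \alpha^{(i)}$ (the algorithm's decrement is nonnegative). The substantive case is $i\in I_0$, where I would first verify that the constraints $\lambda_i\geq \alpha^{(i)}\tau_i$ together with $\alpha_i\leq \alpha^{(i)}(\tau_i-\lambda_i)/(1-\alpha^{(i)})$ force $\alpha_i\leq \lambda_i$; this justifies the clean decomposition
\begin{align*}
\{P_i>\alpha_i\}=\{\alpha_i<P_i\leq \lambda_i\}\,\sqcup\,\{\lambda_i<P_i\leq \tau_i\}\,\sqcup\,\{P_i>\tau_i\},
\end{align*}
on which $\alpha^{(i+1)}$ takes only the two values $\alpha^{(i)}$ (first and third pieces) and $\alpha^{(i)}-\alpha_i(1-\alpha^{(i)})/(\tau_i-\lambda_i)$ (middle piece). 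Applying the inductive hypothesis piecewise and summing bounds the conditional probability by
\begin{align*}
\mathbb{P}(P_i\leq\alpha_i\,|\,\mathcal{G}_{i-1})+\alpha^{(i)}\mathbb{P}(P_i>\alpha_i\,|\,\mathcal{G}_{i-1})-\mathbb{P}(\lambda_i<P_i\leq \tau_i\,|\,\mathcal{G}_{i-1})\tfrac{\alpha_i(1-\alpha^{(i)})}{\tau_i-\lambda_i},
\end{align*}
which after routine algebra is $\leq \alpha^{(i)}$ exactly when $\mathbb{P}(P_i\leq \alpha_i\,|\,\mathcal{G}_{i-1})(\tau_i-\lambda_i)\leq \mathbb{P}(\lambda_i<P_i\leq \tau_i\,|\,\mathcal{G}_{i-1})\alpha_i$.

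This last inequality is where the two standing assumptions enter. Since $i\in I_0$ and null $p$-values are independent from each other and from the non-nulls, $P_i$ is independent of $\mathcal{G}_{i-1}$, so the conditioning drops. Uniform validity then delivers $\mathbb{P}(P_i\leq \alpha_i)\leq (\alpha_i/\tau_i)\mathbb{P}(P_i\leq \tau_i)$ and $\mathbb{P}(P_i\leq \lambda_i)\leq (\lambda_i/\tau_i)\mathbb{P}(P_i\leq \tau_i)$, the latter giving $\mathbb{P}(\lambda_i<P_i\leq \tau_i)\geq \mathbb{P}(P_i\leq \tau_i)(\tau_i-\lambda_i)/\tau_i$; combining the two estimates produces exactly the required bound. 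The main obstacle I anticipate is the bookkeeping in the $i\in I_0$ case: extracting $\alpha_i\leq \lambda_i$ from the parameter constraints so the three-way split is valid, and then checking that the telescoped expression collapses to precisely the inequality that uniform validity supplies.
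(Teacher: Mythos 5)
Your proposal is correct, and its engine is the same as the paper's: a backward induction establishing $\mathbb{P}(\text{some rejection at steps } i,\ldots,n\mid\mathcal{G}_{i-1})\le\alpha^{(i)}$, with the identical three-way split of $\{P_i>\alpha_i\}$ into $\{\alpha_i<P_i\le\lambda_i\}$, $\{\lambda_i<P_i\le\tau_i\}$, $\{P_i>\tau_i\}$, and the same use of independence and uniform validity at the end (your chain $\alpha_i\le\alpha^{(i)}(\tau_i-\lambda_i)/(1-\alpha^{(i)})\le\alpha^{(i)}\tau_i\le\lambda_i$, which justifies the split, does check out, and your reduction to $\mathbb{P}(P_i\le\alpha_i)(\tau_i-\lambda_i)\le\alpha_i\,\mathbb{P}(\lambda_i<P_i\le\tau_i)$ matches the paper's final bracket computation). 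Where you genuinely diverge is in how strong control is reached. The paper first proves weak control (all hypotheses null) and then argues that the levels $(\alpha_i)_{i\in I_0}$ could equally have been produced by running the algorithm on the sub-family $H_{I_0}$, invoking the (online) closure principle; this step quietly relies on conditioning on the non-null $p$-values and checking that the budget constraints remain satisfied for the sub-family. You instead carry the indicator $\{j\in I_0\}$ through a single induction over an arbitrary $I_0$, disposing of the case $i\notin I_0$ by the tower property and the pathwise monotonicity $\alpha^{(i+1)}\le\alpha^{(i)}$, with no distributional assumption on the non-null $p$-value at that step. Your route buys a self-contained, one-pass proof of strong control that avoids the sub-family reduction; the paper's route buys a cleaner separation into a weak-control core plus a generic closure argument that is reusable for other procedures. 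Both are valid; there is no gap in your argument.
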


 \begin{figure}[htbp]
 	\begin{center}
			\centering
 		\includegraphics[width=19.5cm,height=6.5cm,keepaspectratio]{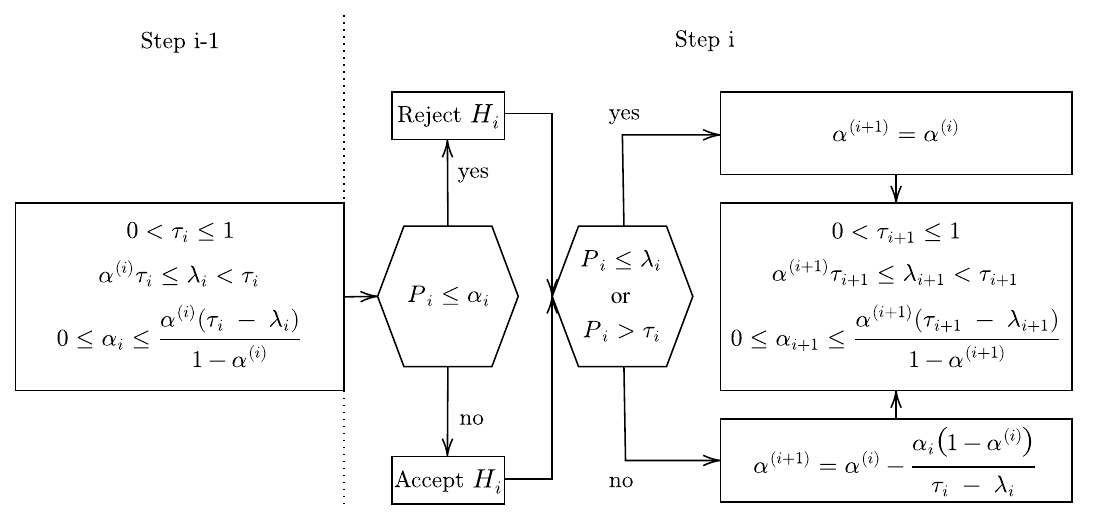}
 	\end{center}
 	\caption{Illustration of the exhaustive ADDIS algorithm from the end of step $i-1$ to step $i$.\label{abb:ex_addis_algorithm}}
 \end{figure}

 In Figure \ref{abb:ex_addis_algorithm}, the exhaustive ADDIS algorithm is illustrated. Note that the main difference between the exhaustive ADDIS algorithm and the ADDIS algorithm (Figure \ref{abb:addis_algorithm}) is the factor $(1-\alpha^{(i)})$ in $\alpha^{(i+1)} = \alpha^{(i)} - \frac{\alpha_i (1-\alpha^{(i)})}{\tau_i-\lambda_i}$ if $\lambda_i<P_i\leq \tau_i$. As the algorithm ensures that $\alpha^{(i)}\geq 0$ for all $i \in \mathbb{N}$, the level that can be spend for the future hypotheses is larger and therefore the exhaustive ADDIS algorithm is uniformly more powerful. Moreover, we have an additional constraint as we need to choose $\lambda_i \geq \tau_i \alpha^{(i)}$ for all $i\in \mathbb{N}$. However, one usually sets $\lambda_i \geq \tau_i \alpha^{(i)}$ anyway in order to exploit the potential of ADDIS procedures. To see this, note that closed procedures allow to reuse the significance level if $P_i\leq \alpha_i$ without requiring the factor $1/(\tau_i-\lambda_i)$ (see Section \ref{sec:existing_lit}). Therefore, one should always use $\lambda_i \gg \alpha_i$ in ADDIS procedures, which often leads to  $\lambda_i \geq \tau_i \alpha^{(i)}$ automatically. For example, \cite{TR} recommended to choose $\lambda_i=\tau_i \alpha$, which is always greater or equal than $\tau_i \alpha^{(i)} $. Hence, $\lambda_i \geq \tau_i \alpha^{(i)}$ is only a minor constraint.

\begin{remark}\hphantom{1}
An interesting special case of the exhaustive ADDIS algorithm is obtained for $\tau_i=1$ and $\lambda_i=\alpha^{(i)}$, where it basically reduces to the condition $\alpha_i\leq \alpha^{(i)}$ with $\alpha^{(i)}=\alpha-\sum_{j<i} \alpha_j (1-\mathbbm{1}_{P_j\leq \alpha^{(j)}})$. It is easy to see that this uniformly improves the Alpha-Spending-based closed procedures (see Section \ref{sec:existing_lit}). However, note that this improvement only works when the null $p$-values are independent of each other and the non-nulls.
\end{remark}

The exhaustive ADDIS algorithm does not only uniformly improve the ADDIS algorithm, but it is also optimal in the sense that the event of rejecting at least one hypothesis cannot be enlarged.  Before showing this, we prove that under the global null hypothesis the probability of committing any type I error is exactly $\alpha$.
\begin{proposition}\label{prop}
    Assume the null $p$-values are uniformly distributed and independent. In addition, let the individual significance levels of the exhaustive ADDIS algorithm be chosen such that $\sum_{i\in \mathbb{N}} \alpha_i \frac{1-\alpha^{(i)}}{\tau_i-\lambda_i} (S_i-C_i)=\alpha$. Then, under the global null hypothesis, we obtain:
    $$\mathbb{P}(V>0|I_0=\mathbb{N})=\alpha.$$
\end{proposition}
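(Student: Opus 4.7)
The plan is to turn the FWER inequality of Theorem \ref{theo:strong_control} into an equality by exploiting the exact uniformity of the null $p$-values. Concretely, I would introduce a martingale that tracks the remaining FWER budget and pass to the limit, using the exhaustion hypothesis to kill the residual term.

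Writing $R_i = \{\exists\, j \leq i : P_j \leq \alpha_j\}$ for the event that (under $I_0 = \mathbb{N}$) at least one false rejection has occurred by step $i$, set
\[
M_i \;=\; \mathbbm{1}_{R_i} \;+\; \mathbbm{1}_{R_i^c}\,\alpha^{(i+1)}, \qquad M_0 = \alpha^{(1)} = \alpha.
\]
The key step is to verify that $(M_i)$ is a $(\mathcal{G}_i)$-martingale, which, after isolating the increment on $R_{i-1}^c$, reduces to the one-step identity
\[
\mathbb{E}\bigl[\mathbbm{1}_{P_i \leq \alpha_i} + \mathbbm{1}_{P_i > \alpha_i}\,\alpha^{(i+1)} \bigm| \mathcal{G}_{i-1}\bigr] \;=\; \alpha^{(i)}.
\]
Under independent uniform nulls this is a short direct calculation; a minor but necessary ingredient is that the constraints $\lambda_i \geq \tau_i \alpha^{(i)}$ and $\alpha_i(1-\alpha^{(i)}) \leq \alpha^{(i)}(\tau_i - \lambda_i)$ built into Definition \ref{def:ex_addis_algorithm} together force $\alpha_i \leq \tau_i \alpha^{(i)} \leq \lambda_i$, so $\{P_i > \alpha_i\}$ partitions cleanly into the three intervals $(\alpha_i,\lambda_i]$, $(\lambda_i,\tau_i]$, $(\tau_i,1]$ that govern the update of $\alpha^{(i+1)}$.

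The martingale property gives $\mathbb{E}[M_i] = \alpha$ for every $i$. Since $\alpha^{(i)} - \alpha^{(i+1)} = \alpha_i(1-\alpha^{(i)})/(\tau_i - \lambda_i)\,(S_i - C_i)$, the exhaustion assumption telescopes to $\alpha^{(i)} \downarrow 0$ almost surely, and $\alpha^{(i+1)} \leq \alpha$ is uniformly bounded. Bounded convergence then lets me pass to the limit in $\alpha = \mathbb{P}(R_i) + \mathbb{E}[\mathbbm{1}_{R_i^c}\alpha^{(i+1)}]$, yielding $\mathbb{P}(V > 0 \mid I_0 = \mathbb{N}) = \mathbb{P}(R_\infty) = \alpha$. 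I expect the main obstacle to be the one-step identity: although structurally it is the calculation done for two hypotheses at the start of Section \ref{sec:ex:addis_algorithm}, the three-interval book-keeping and the verification that $\alpha_i \leq \lambda_i$ (so $\{P_i \leq \alpha_i\}$ does not spill into $\{\lambda_i < P_i \leq \tau_i\}$) require a touch of care before the rest of the argument becomes essentially routine.
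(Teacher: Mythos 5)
Your proof is correct, and it takes a genuinely different route from the paper's. The paper derives the Proposition by revisiting its backward-induction proof of Theorem \ref{theo:strong_control}: it splits into the case where $\alpha^{(i)}=0$ at some finite step (where every inequality in the induction becomes an equality) and the case $\alpha^{(i)}>0$ for all $i$, which it handles with an $\varepsilon$-argument, showing $\mathbb{P}(R_{j:i}>0\mid\mathcal{G}_{j-1})\geq\alpha^{(j)}-\delta$ for arbitrary $\delta>0$ and letting $\delta\to 0$. You instead run the argument forward: the process $M_i=\mathbbm{1}_{R_i}+\mathbbm{1}_{R_i^c}\alpha^{(i+1)}$ is a bona fide $(\mathcal{G}_i)$-martingale, and your one-step identity is exactly the equality version of the paper's induction step (the computation $\alpha_i+\alpha^{(i)}(\lambda_i-\alpha_i)+\bigl(\alpha^{(i)}-\tfrac{\alpha_i(1-\alpha^{(i)})}{\tau_i-\lambda_i}\bigr)(\tau_i-\lambda_i)+\alpha^{(i)}(1-\tau_i)=\alpha^{(i)}$ checks out, and your observation that $\alpha_i\leq\tau_i\alpha^{(i)}\leq\lambda_i$ follows from the two constraints of Definition \ref{def:ex_addis_algorithm} is exactly the chain the paper uses in its initial induction case). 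What your approach buys is a cleaner treatment of the infinite horizon: since the exhaustion hypothesis forces $\alpha^{(i)}\downarrow 0$ almost surely, bounded convergence in $\alpha=\mathbb{E}[M_i]=\mathbb{P}(R_i)+\mathbb{E}[\mathbbm{1}_{R_i^c}\alpha^{(i+1)}]$ dispenses with both the case distinction and the $\delta$-approximation in one stroke. What the paper's approach buys is economy — it reuses the already-proved Theorem \ref{theo:strong_control} nearly verbatim rather than setting up new machinery. One cosmetic caution: the paper uses $R_i$ for the indicator $\mathbbm{1}_{P_i\leq\alpha_i}$ and $R_{j:i}$ for partial sums, whereas you use $R_i$ for the cumulative rejection event, so the notation should be reconciled if the two proofs are to coexist.
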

Note that under the global null hypothesis and uniformly distributed null p-values, there will almost surely be infinitely many $p$-values $P_j$ with $\lambda_j<P_j\leq \tau_j$, implying that $S_j-C_j=1$. Hence, the condition $\sum_{i\in \mathbb{N}} \alpha_i \frac{1-\alpha^{(i)}}{\tau_i-\lambda_i} (S_i-C_i)=\alpha$ can be ensured by construction of the procedure and does not depend on the data. In order to derive our optimality result, a weak assumption concerning the considered statistical model is required. Otherwise, \cite{GHS} showed that one can construct degenerate models in which no optimal tests exist. They introduced the following assumption to circumvent this issue:
\begin{align}
    \text{For any event } E\in \mathcal{A} \text{ with } \mathbb{P}(E|I_0=\mathbb{N})=0 \text{ it also holds } \mathbb{P}(E)=0. \label{eq:assump} 
\end{align}
This ensures that if the probability of an event under the global null hypothesis is zero, it is also zero under the true data constellation. This holds in most models considered in applied statistics \citep{GHS}. 
\begin{theorem}\label{theo:uniform_improvement}
    Assume that the assumptions of Proposition \ref{prop} are satisfied and \eqref{eq:assump} holds. Then there is no procedure $(\tilde{\alpha}_i)_{i\in \mathbb{N}}$ with FWER control such that 
    $$ \{\exists i\in \mathbb{N}: P_i\leq \tilde{\alpha}_i\} \supseteq \{\exists i\in \mathbb{N}: P_i\leq \alpha_i\} \text{ and }  \mathbb{P}(\{\exists i\in \mathbb{N}: P_i\leq \tilde{\alpha}_i\} \setminus \{\exists i\in \mathbb{N}: P_i\leq \alpha_i\})>0. $$ 
\end{theorem}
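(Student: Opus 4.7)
The plan is to argue by contradiction. Suppose such an online procedure $(\tilde{\alpha}_i)_{i\in\mathbb{N}}$ with FWER control exists, and denote the rejection events $R \coloneqq \{\exists i\in\mathbb{N}: P_i\leq \alpha_i\}$ for the exhaustive ADDIS algorithm and $\tilde{R} \coloneqq \{\exists i\in\mathbb{N}: P_i\leq \tilde{\alpha}_i\}$ for the alternative. By hypothesis, $R\subseteq\tilde{R}$ and $\mathbb{P}(\tilde{R}\setminus R)>0$, where $\mathbb{P}$ is the true data-generating distribution. Both $R$ and $\tilde{R}$ are clearly events in $\mathcal{A}$ since they are determined by the $p$-values and by the online procedures, which are adapted to the filtration generated by the $p$-values.

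The first key input is Proposition \ref{prop}: applied to the exhaustive ADDIS algorithm with the exhausting choice of levels, it yields the \emph{exact} equality $\mathbb{P}(R\mid I_0=\mathbb{N})=\alpha$ under the global null hypothesis. The exhausting condition can be enforced by construction (it is independent of the data), and the assumption that the null $p$-values are uniformly distributed and independent is exactly what Proposition \ref{prop} requires. The second key input is assumption \eqref{eq:assump}, which I would apply in its contrapositive form: if $\mathbb{P}(E)>0$, then $\mathbb{P}(E\mid I_0=\mathbb{N})>0$. Taking $E=\tilde{R}\setminus R\in\mathcal{A}$ gives $\mathbb{P}(\tilde{R}\setminus R\mid I_0=\mathbb{N})>0$, transferring the strict improvement from the true data generating distribution to the global null configuration.

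Combining the two yields
\begin{equation*}
\mathbb{P}(\tilde{R}\mid I_0=\mathbb{N}) \;=\; \mathbb{P}(R\mid I_0=\mathbb{N}) + \mathbb{P}(\tilde{R}\setminus R\mid I_0=\mathbb{N}) \;>\; \alpha,
\end{equation*}
so $(\tilde{\alpha}_i)_{i\in\mathbb{N}}$ would violate weak (hence strong) FWER control under the global null, contradicting the assumed FWER control. The argument is conceptually short once the ingredients are in place; the main obstacle is to appreciate that Proposition \ref{prop} provides an \emph{exact} saturation of the level $\alpha$ rather than merely an inequality, because only then does any strict enlargement of the rejection event (however small, and even if only visible under the true distribution via \eqref{eq:assump}) immediately push the FWER above~$\alpha$.
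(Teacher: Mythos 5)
Your proposal is correct and follows essentially the same route as the paper's own proof: decompose $\mathbb{P}(\tilde{R}\mid I_0=\mathbb{N})$ using $R\subseteq\tilde{R}$, invoke Proposition \ref{prop} for the exact equality $\mathbb{P}(R\mid I_0=\mathbb{N})=\alpha$, and use the contrapositive of \eqref{eq:assump} to make the extra term strictly positive under the global null, contradicting weak FWER control. No gaps.
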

In particular, note that Theorem \ref{theo:uniform_improvement} does not hold for the ADDIS algorithm, as the exhaustive ADDIS algorithm is always as least as good as the ADDIS algorithm and leads to a larger probability of rejecting any hypothesis in some cases.

\section{Uniform improvements of ADDIS-Spending and ADDIS-Graph\label{sec:improved_addis_procedures}}

When applying the exhaustive ADDIS algorithm, one could just calculate the $\alpha^{(i)}$ at each step and then choose $\tau_i$, $\lambda_i$ and $\alpha_i$ according to the conditions given in Figure \ref{abb:ex_addis_algorithm}. However, in some situations the analyst may not want to do these steps manually and would like the algorithm to give out a specific significance level based on the previous test results. For example, this might be the case when the time between hypothesis tests is very short or when performing a simulation study. \cite{TR} proposed the ADDIS-Spending and \cite{fischer2023adaptive} the ADDIS-Graph as concrete procedures satisfying the ADDIS principle. In this Section, we show how these procedures could be improved using the exhaustive ADDIS algorithm.

 For the construction of the ADDIS-Spending and ADDIS-Graph it is started with a non-negative sequence $(\gamma_i)_{i\in \mathbb{N}}$ that sums to at most one and which can be interpreted as the initial allocation of the significance level $\alpha$. In order to obtain $\alpha^{(i+1)}=\alpha^{(i)}$ in case of $P_i\leq \lambda_i$ or $P_i>\tau_i$, the ADDIS-Spending ignores hypothesis $H_i$ in the future testing process, while the ADDIS-Graph distributes its level according to non-negative weights $(g_{j,i})_{i=j+1}^{\infty}$ that sum to at most one for each $j\in \mathbb{N}$. To compensate for the fact that we lose the level $\frac{\alpha_i}{\tau_i-\lambda_i}$ in case of $\lambda_i<P_i\leq \tau_i$ when using the ADDIS algorithm (Figure \ref{abb:addis_algorithm}), the significance levels of both procedures are multiplied by the factor $(\tau_i-\lambda_i)$. This leads to the individual significance level 
 \begin{align*}\alpha_i = (\tau_i-\lambda_i)  \alpha\gamma_{t(i)}, \quad \text{where } t(i)=1+ \sum_{j=1}^{i-1} (S_j-C_j).\end{align*}
 for the ADDIS-Spending and 
 \begin{equation}\label{eq:addis_graph}
\alpha_i = (\tau_i-\lambda_i)\left(\alpha \gamma_i + \sum_{j=1}^{i-1} g_{j,i}(C_j-S_j+1)  \frac{\alpha_j}{\tau_j-\lambda_j}\right)  
\end{equation}
for the ADDIS-Graph. We derive the graphical representation of the ADDIS-Graph in the Appendix.  

The easiest way to uniformly improve these procedures based on the exhaustive ADDIS algorithm is to change the factor $(\tau_i-\lambda_i)$ in both procedures to $(\tau_i-\lambda_i)/(1-\alpha^{(i)})$.
This leads to the \textit{Exhaustive-ADDIS-Spending} (E-ADDIS-Spending)
\begin{align*}
\alpha_i =  \frac{\tau_i-\lambda_i}{1-\alpha^{(i)}}\alpha \gamma_{t(i)}, \quad \text{where } t(i)=1+ \sum_{j=1}^{i-1} (S_j-C_j). 
\end{align*}
and \textit{Exhaustive-ADDIS-Graph} (E-ADDIS-Graph)
\begin{align*}
\alpha_i = \frac{\tau_i-\lambda_i}{1-\alpha^{(i)}}\left(\alpha \gamma_i + \sum_{j=1}^{i-1} g_{j,i} (C_j-S_j+1) \alpha_j \frac{1-\alpha^{(j)}}{\tau_j-\lambda_j}\right), 
\end{align*}
  where $\alpha^{(i)}=\alpha-\sum_{j=1}^{i-1} \frac{\alpha_j(1-\alpha^{(j)})}{\tau_j-\lambda_j} (S_j-C_j)$. Note that one needs to set $\lambda_i\geq \tau_i \alpha^{(i)}$ in these procedures as required by the exhaustive ADDIS algorithm.

  Since $\alpha^{(i)}\geq 0$ for all $i\in \mathbb{N}$ and at least $\alpha^{(1)}=\alpha > 0$, these procedures provide uniform improvements of the usual ADDIS procedures. However, except for unrealistic extreme cases, $\alpha^{(i)}$ will tend to zero for $i$ to infinity. This implies that these improvements can be very marginal for hypotheses that are tested at a late stage. For this reason, we propose a further approach to exploit the exhaustive ADDIS algorithm.
 
  Let $i\in \mathbb{N}$ be arbitrary. Note that 
$$ \sum_{j=1}^{ i}  \frac{\alpha_j(1-\alpha^{(j)})}{\tau_j-\lambda_j} (S_j-C_j) \leq \alpha \iff \sum_{j=1}^{ i}  \frac{\alpha_j}{\tau_j-\lambda_j} (S_j-C_j) \leq \alpha + \sum_{j=1}^{i}  \frac{\alpha_j \alpha^{(j)}}{\tau_j-\lambda_j} (S_j-C_j),$$
 where $\alpha^{(i)}=\alpha-\sum_{j=1}^{i-1} \frac{\alpha_j (1-\alpha^{(j)})}{\tau_j-\lambda_j}$ for all $i\in \mathbb{N}$.
 Thus, applying the ADDIS algorithm (Definition \ref{def:addis_algorithm}) at each step $i\in \mathbb{N}$ at the level $\alpha + \sum_{j=1}^{i}  \frac{\alpha_j \alpha^{(j)}}{\tau_j-\lambda_j} (S_j-C_j)$  is equivalent to applying the exhaustive ADDIS algorithm at level $\alpha$.
 
 For example, this could be incorporated into the ADDIS-Graph \eqref{eq:addis_graph} by distributing the level $(\alpha_j \alpha^{(j)})/(\tau_j-\lambda_j)$ in case of $\lambda_j < P_j \leq \tau_j$ to the future hypotheses according to non-negative weights $(h_{j,i})_{i=j+1}^{\infty}$ that sum to at most one for each $j\in \mathbb{N}$. This approach leads to the following procedure, which we term \textit{Evenly-Improved-ADDIS-Graph} (EI-ADDIS-Graph).
 \begin{align*}
\alpha_i = (\tau_i-\lambda_i)\left(\alpha \gamma_i + \sum_{j=1}^{i-1} g_{j,i}(C_j-S_j+1) \alpha_j \frac{1}{\tau_i-\lambda_j} + \sum_{j=1}^{i-1} h_{j,i}(S_j-C_j) \alpha_j \frac{\alpha^{(j)}}{\tau_i-\lambda_j}\right). 
\end{align*}

Note that the EI-ADDIS-Graph can be interpreted just as the ADDIS-Graph. However, the EI-ADDIS-Graph distributes the significance level to the future hypotheses also if $\lambda_j < P_j \leq \tau_j$, but reduced by the factor $\alpha^{(j)}$. Obviously, this defines a uniform improvement of the ADDIS-Graph. Furthermore, the weights $(h_{j,i})_{j\in \mathbb{N},i>j}$ determine which hypotheses benefit from the improvement, such that the gained significance level can be spent more evenly than using the E-ADDIS-Graph. This may lead to a larger power improvement when compared with the ADDIS-Graph. Also note that the same improvement could be applied to the ADDIS-Spending, as the ADDIS-Spending can be written as a specific ADDIS-Graph \citep{fischer2023adaptive}.

\section{Simulations\label{sec:sim}}
In this section, we aim to quantify the gain in power using the proposed EI-ADDIS-Graph instead of the ADDIS-Graph. To this regard, we consider the gaussian testing setup as described in \cite{TR}. More precise, $n$ null hypotheses $(H_i)_{i\in \{1,\ldots,n\}}$ of the form $H_i: \mu_i\coloneqq \mathbb{E}(Z_i)\leq 0$ are tested sequentially using z-tests. During the generation process of a test statistic $Z_i$, $i\in \{1,\ldots,n\}$, it is assumed that $Z_i=X_i+\mu_A$, $\mu_A\in \{2,4\}$, with probability $\pi_A\in \{0.1,\ldots, 0.9\}$ and $Z_i=X_i+\mu_N$, $\mu_N\in \{-2,0\}$, with probability $1-\pi_A$, where $X_1,\ldots,X_n\stackrel{i.i.d.}{\sim} N(0,1)$. Note that $\pi_A$ is the probability of a null hypothesis being false and $\mu_A$ the strength of the alternative. Moreover, the null $p$-values are conservative if $\mu_N=-2$ and uniformly distributed if $\mu_N=0$. 

 Figure \ref{fig:plot_qseries} and \ref{fig:plot_logq} show the estimated power and FWER of the Alpha-Spending, ADDIS-Graph and EI-ADDIS-Graph for $n=1000$ hypotheses by averaging over $2000$ independent trials (Alpha-Spending is included as a reference). The power is represented by the solid lines and the FWER by the dashed lines. In the left plots, the $p$-values are uniformly distributed ($\mu_N=0$) and in the right plots they are conservative ($\mu_N=-2$). As done by \cite{TR}, we applied the ADDIS procedures at level $\alpha=0.2$ with parameters $\tau_i=0.8$ and $\lambda_i=0.16$ for all $i\in \mathbb{N}$. Furthermore, we set $\gamma_i=\frac{6}{\pi^2 i^2}$ in Figure \ref{fig:plot_qseries} and  $\gamma_i\propto \frac{1}{(i+1)log(i+1)^2}$ in Figure \ref{fig:plot_logq} for all $i\in \mathbb{N}$. In addition, we chose $g_{j,i}=\gamma_{i-j}$ and $h_{j,i}=g_{j,i}$ for all $j\in \mathbb{N}$ and $i>j$ in all cases. 
 
 \begin{figure}[htbp]
	\begin{center}
			\centering		\includegraphics[width=18cm,height=6cm,keepaspectratio]{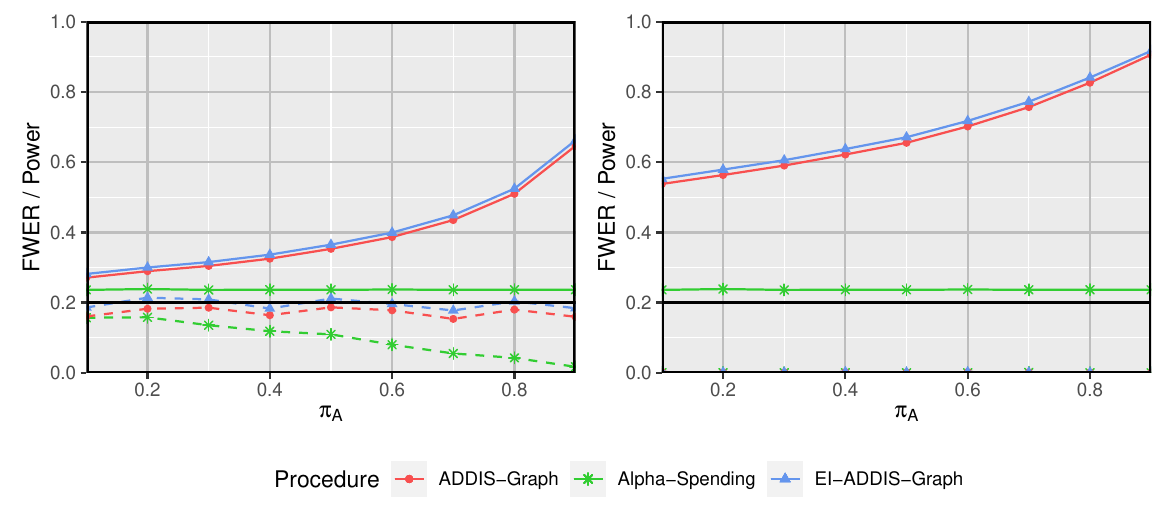}
	\end{center}
	\caption{Power and FWER for $n=1000$ hypotheses against proportion of false hypotheses ($\pi_A$) for Alpha-Spending, ADDIS-Graph and EI-ADDIS-Graph at level $\alpha=0.2$. Solid lines  correspond to power and dashed lines to FWER; Strength of the alternative is $\mu_A=4$ in both plots; $p$-values are uniformly distributed ($\mu_N=0$) in the left plot and conservative ($\mu_N=-2$) in the right; Procedures were applied with parameters $\gamma_i=\frac{6}{\pi^2 i^2}$, $g_{j,i}=\gamma_{i-j}$, $h_{j,i}=g_{j,i}$, $\tau_i=0.8$ and $\lambda_i=0.16$. \label{fig:plot_qseries}}
\end{figure}
 
 \begin{figure}[htbp]
	\begin{center}
			\centering
		\includegraphics[width=18cm,height=6cm,keepaspectratio]{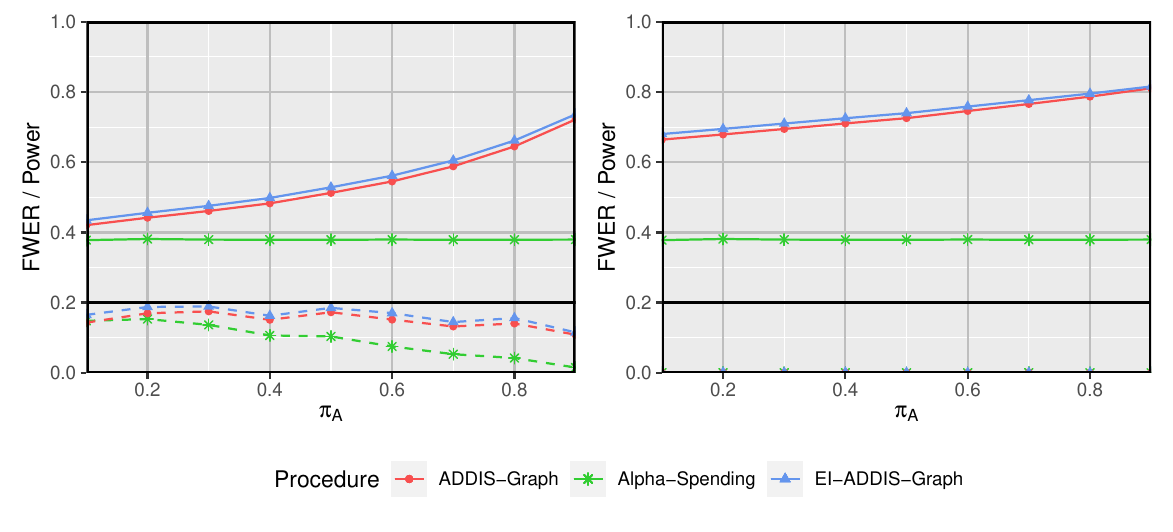}
	\end{center}
	\caption{Same caption as in Figure \ref{fig:plot_qseries}, except with $\gamma_i\propto \frac{1}{(i+1)log(i+1)^2}$}. \label{fig:plot_logq}
\end{figure}
 
The results show that the EI-ADDIS-Graph achieves a power improvement of $0.01$ to $0.02$ in all cases. Considering a number of $1000$ hypotheses this also means that we expect to make at least one more true discovery when using the EI-ADDIS-Graph instead of the ADDIS-Graph and even more when the proportion of false hypotheses is large. Furthermore, Figure \ref{fig:plot_qseries} indicates that the EI-ADDIS-Graph is roughly exhausting the significance level. Note that the level is not exhausted in Figure \ref{fig:plot_logq}, because the $(\gamma_i)_{i\in \mathbb{N}}$ decreases very slowly and since the testing process stops at step $1000$, not the entire level is used.

Since there are also online multiple testing problems, e.g. platform trials, where much less than one thousand hypotheses are tested, we also considered a setting with $n=10$ hypotheses. In this case, we also set the strength of the alternative to $\mu_A=2$. The results when applying the procedures with the same parameters as in Figure \ref{fig:plot_qseries} can be found in Figure \ref{fig:plot_logq_lown}. It is easy to see that in this case the power gain of the EI-ADDIS-Graph is larger. 

 \begin{figure}[htbp]
	\begin{center}
			\centering
		\includegraphics[width=18cm,height=6cm,keepaspectratio]{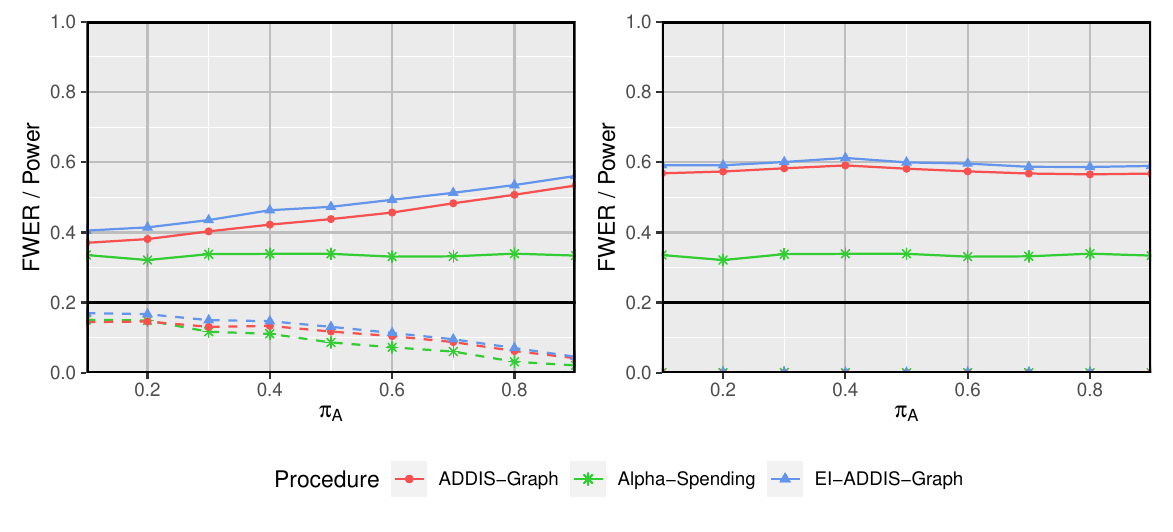}
	\end{center}
	\caption{Same caption as in Figure \ref{fig:plot_qseries}, except with $n=10$ and $\mu_A=2$.}\label{fig:plot_logq_lown}
\end{figure}

We do not present them here, but the same simulations were performed for E-ADDIS-Graph and E-ADDIS-Spending compared to ADDIS-Graph and ADDIS-Spending, respectively. As expected, the performance improvements are slightly smaller than those obtained using EI-ADDIS-Graph.

 \section{Application to IMPC data\label{sec:real_data}}
The International Mouse Phenotyping Consortium (IMPC) coordinates a large study to pin down the function of each protein-coding mouse gene \citep{mouse}. Since the resulting public data base is growing over time, IMPC data is used as a benchmark application for online multiple testing \citep{Retal}. We apply our procedures to 5000 of the $p$-values available at the Zenodo repository \url{https://zenodo.org/record/2396572} \citep{robertson2019onlinefdr}, which resulted from the evaluation by \cite{karp2017prevalence}. Note that the $p$-values in the IMPC data set are possibly correlated. However, we apply our procedures assuming independence for illustrative purposes.

 We applied Alpha-Spending, ADDIS-Graph, ADDIS-Spending, E-ADDIS-Graph, E-ADDIS-Spending and EI-ADDIS-Graph with the parameters $\gamma_i \propto \frac{1}{(i+1)log(i+1)^{1.5}}$, $g_{j,i}=\gamma_{i-j}$, $h_{j,i}=g_{j,i}$, $\tau_i=0.8$ and $\lambda_i=0.16$. The number of rejections obtained by the procedures for different FWER levels $\alpha$ can be found in Figure \ref{fig:plot_real}. As expected, the Alpha-Spending performed worst. In addition, the ADDIS-Graphs led to more rejections than the ADDIS-Spending procedures. The EI-ADDIS-Graph led to most rejections, which is consistent with our theoretical reasoning that the EI-ADDIS-Graph performs best in practice. In all cases, it rejected more hypotheses than the ADDIS-Graph. The rejection gap between these procedure is growing for an increasing FWER level. At level $\alpha=0.05$ the EI-ADDIS-Graph rejected $3$ hypotheses more than the ADDIS-Graph, while it led to $25$ additional rejections at $\alpha=0.4$.

 \begin{figure}[htbp]
 	\begin{center}
			\centering
 		\includegraphics[width=18cm,height=6cm,keepaspectratio]{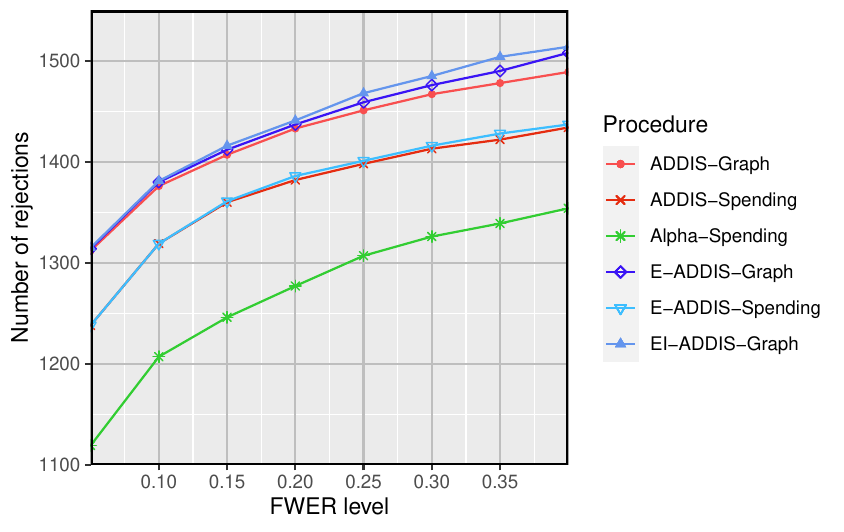}
 	\end{center}
 	\caption{Number of rejections obtained by applying Alpha-Spending, ADDIS-Graph, ADDIS-Spending, E-ADDIS-Graph, E-ADDIS-Spending and EI-ADDIS-Graph to IMPC data for different FWER levels. Procedures were applied with parameters $\gamma_i \propto \frac{1}{(n+1)log(n+1)^{1.5}}$, $g_{j,i}=\gamma_{i-j}$, $h_{j,i}=g_{j,i}$, $\tau_i=0.8$ and $\lambda_i=0.16$.}
 	\label{fig:plot_real}
 \end{figure}

\section{Discussion}

\cite{TR} asked whether their ADDIS principle is uniform improvable. We answered this question by firstly formulating the ADDIS principle as a general online procedure (Section \ref{sec:addis_algorithm}), called ADDIS algorithm, and secondly introducing an exhaustive ADDIS algorithm (Section \ref{sec:ex:addis_algorithm}), which is a uniform improvement of the ADDIS algorithm.  Since no additional assumptions are needed for obtaining the improvement, the exhaustive ADDIS procedures should be preferred over the usual ADDIS procedures in practice. Furthermore, the proposed methods not only lead to power improvements, but we also provide a general algorithm for their construction which is easy to use.

In practical use, offering software has become increasingly necessary. Statistical software, such as \texttt{R} packages and shiny apps, have been introduced to implement online control procedures \citep{robertson2019onlinefdr}. As part of future efforts, the creation of a new package or the inclusion of new functions in existing software that implement our proposed procedures will be considered.

Our uniform improvement is based on a different proof idea than the usual ADDIS principle,  which makes it possible to exploit the independence between $p$-values. We wonder whether the same approach can be used to derive adaptive procedures that work under more complex (but still known) dependence structures. This could be addressed in future work.

\newpage

\section*{Appendix}\label{appn}

\subsection*{Proofs}

 \begin{proof}[Proof of Theorem \ref{theo:strong_control}]
 We first show weak FWER control. Strong FWER is then obtained by the (online) closure principle. For proving weak control it is sufficient to show that $\text{FWER}\leq \alpha$ given $I_0=\mathbb{N}$. Let $R_i=\mathbbm{1}_{P_i \leq \alpha_i}$ and $R_{j:i}=\sum_{k=j}^i \mathbbm{1}_{P_k \leq \alpha_k}$ for $i,j\in \mathbb{N}$, $i\geq j$. Then weak control of $\text{FWER}$ is equivalent to $\mathbb{P}(R_{1:i} >0|I_0=\mathbb{N}) \leq \alpha$ for all $i\in \mathbb{N}$ (we omit $|I_0=\mathbb{N}$ in the remaining proof for better readability). 

 Let $i \in \mathbb{N}$ be arbitrary but fixed. We first use backwards induction to deduce that 
 $$\mathbb{P}(R_{j:i}>0 | \mathcal{G}_{j-1}) \leq \alpha^{(j)} \qquad \forall j \in \{1, \ldots,i\},$$
 where $\mathcal{G}_j=\sigma(\{P_1, \ldots, P_j\})$.\newline
 \begin{itemize}
     \item
 Initial Case ($j=i$):
 $$\mathbb{P}(R_{i:i} >0 | \mathcal{G}_{i-1})=\mathbb{P}(R_i >0| \mathcal{G}_{i-1}) = \mathbb{P}(P_i \leq \alpha_i| \mathcal{G}_{i-1})\leq \alpha_i \leq \frac{\alpha^{(i)}(\tau_i- \lambda_{i})}{1-\alpha^{(i)}} \leq \alpha^{(i)},$$
 since $\lambda_i \geq \tau_i \alpha^{(i)}$ and $\tau_i\leq 1$ for all $i \in \mathbb{N}$. \newline
 \item Induction Hypothesis (IH): We assume that for an arbitrary $j\in \{2,\ldots,i\}:$ $\mathbb{P}(R_{j:i}>0| \mathcal{G}_{j-1}) \leq \alpha^{(j)}.$ In the following, we also use that  $\alpha^{(j)}=\alpha^{(j-1)}$ if $P_{j-1} \leq \lambda_{j-1}$ or $P_{j-1} > \tau_{j-1}$ and $\alpha^{(j)}=\alpha^{(j-1)}- \frac{\alpha_{j-1} (1- \alpha^{(j-1)})}{\tau_{j-1}-\lambda_{j-1}}$ if $ \lambda_{j-1}< P_{j-1} \leq \tau_{j-1}$.\newline
 \item Induction Step ($ j \rightarrow j-1$):
 \begin{align*}
\mathbb{P}(R_{(j-1):i} >0| \mathcal{G}_{j-2})&= \mathbb{P}(R_{j-1}>0| \mathcal{G}_{j-2})+\mathbb{P}(R_{j:i}>0, R_{j-1}=0| \mathcal{G}_{j-2}) \\
 &=\mathbb{P}(R_{j-1} >0| \mathcal{G}_{j-2}) \\
 &+ \mathbb{P}(R_{j:i} >0|  \lambda_{j-1}< P_{j-1} \leq \tau_{j-1}, \mathcal{G}_{j-2}) \mathbb{P}(\lambda_{j-1}< P_{j-1} \leq \tau_{j-1}| \mathcal{G}_{j-2}) \\ &+ \mathbb{P}(R_{j:i} >0 | \alpha_{j-1} < P_{j-1} \leq \lambda_{j-1} ,\mathcal{G}_{j-2}) \mathbb{P}(\alpha_{j-1} < P_{j-1} \leq \lambda_{j-1} | \mathcal{G}_{j-2})\\ &+ \mathbb{P}(R_{j:i} >0 | P_{j-1} > \tau_{j-1} ,\mathcal{G}_{j-2}) \mathbb{P}(P_{j-1} > \tau_{j-1} | \mathcal{G}_{j-2})\\
 & \stackrel{\text{IH}}{\leq} \mathbb{P}(P_{j-1} \leq \alpha_{j-1}| \mathcal{G}_{j-2}) \\ &+ \left(\alpha^{(j-1)}- \frac{\alpha_{j-1} (1- \alpha^{(j-1)})}{\tau_{j-1}-\lambda_{j-1}}\right) \mathbb{P}(\lambda_{j-1}< P_{j-1} \leq \tau_{j-1} | \mathcal{G}_{j-2}) \\ &+  \alpha^{(j-1)}\mathbb{P}(\alpha_{j-1} < P_{j-1} \leq \lambda_{j-1}| \mathcal{G}_{j-2})+\alpha^{(j-1)}\mathbb{P}( P_{j-1} > \tau_{j-1}| \mathcal{G}_{j-2}) \\
 &= (1-\alpha^{(j-1)}) \mathbb{P}(P_{j-1} \leq \alpha_{j-1}| \mathcal{G}_{j-2}) \\& +   \frac{\alpha_{j-1} (1- \alpha^{(j-1)})}{\tau_{j-1}-\lambda_{j-1}} \mathbb{P}(P_{j-1} \leq \lambda_{j-1}| \mathcal{G}_{j-2}) \\ &+\alpha^{(j-1)}- \frac{\alpha_{j-1} (1- \alpha^{(j-1)})}{\tau_{j-1}-\lambda_{j-1}}\mathbb{P}(P_{j-1} \leq \tau_{j-1}| \mathcal{G}_{j-2}) \\
 &= \alpha^{(j-1)} \\ 
 &+ \mathbb{P}(P_{j-1} \leq \tau_{j-1}| \mathcal{G}_{j-2}) (1- \alpha^{(j-1)}) \left[\vphantom{\frac{\alpha_{j-1} }{\tau_{j-1}-\lambda_{j-1}}} \mathbb{P}(P_{j-1} \leq \alpha_{j-1}| P_{j-1}\leq \tau_{j-1}, \mathcal{G}_{j-2}) \right. \\ 
 &+ \left. \frac{\alpha_{j-1} }{\tau_{j-1}-\lambda_{j-1}} \mathbb{P}(P_{j-1} \leq \lambda_{j-1}| P_{j-1}\leq \tau_{j-1}, \mathcal{G}_{j-2})-\frac{\alpha_{j-1} }{\tau_{j-1}-\lambda_{j-1}}\right]
 \\
 & \leq \alpha^{(j-1)}  + \mathbb{P}(P_{j-1} \leq \tau_{j-1}| \mathcal{G}_{j-2}) (1-\alpha^{(j-1)}) \alpha_{j-1} \\ 
 & \left[\frac{1}{\tau_{j-1}} + \frac{\lambda_{j-1}}{\tau_{j-1}(\tau_{j-1}-\lambda_{j-1})} - \frac{1}{\tau_{j-1}-\lambda_{j-1}}\right] \\
 &=\alpha^{(j-1)}. \end{align*}
 Note that the independence of the null $p$-values was used in both inequalities and the uniformly validity  was applied in the second inequality. Since $\alpha^{(1)} = \alpha$ and $\mathcal{G}_{0}=\emptyset$, in particular $\mathbb{P}(R_{1:i} >0) \leq \alpha$. As $i \in \mathbb{N}$ was arbitrary, weak FWER control follows.

 \end{itemize}
Now we use this to show strong control. To this end, let $I_0 \subseteq \mathbb{N}$ be arbitrary and $(\alpha_i)_{i\in \mathbb{N}}$ be obtained by applying the exhaustive ADDIS algorithm to $(H_i)_{i\in \mathbb{N}}$. Since the null $p$-values are independent from the non-nulls and $$\sum_{j<i, j\in I_0} \alpha_j \frac{1-\alpha_{I_0}^{(j)}}{\tau_j-\lambda_j}\leq \sum_{j<i} \alpha_j \frac{1-\alpha^{(j)}}{\tau_j-\lambda_j}\leq \alpha \quad \text{ for every } i\in \mathbb{N}, $$  where $\alpha_{I_0}^{(j)}=\alpha - \sum_{k<j, j\in I_0} \alpha_k \frac{1-\alpha_{I_0}^{(k)}}{\tau_k-\lambda_k}$, the levels $(\alpha_i)_{i\in I_0}$ could also be obtained by applying the exhaustive ADDIS algorithm to $H_{I_0}$. Consequently, strong FWER control is implied by weak control.
 \end{proof}

\begin{proof}[Proof of  Proposition \ref{prop}]
 The assertion follows from the proof of Theorem \ref{theo:strong_control}. For this, first consider the case where the entire significance level is spent for a finite number of hypotheses, thus $\alpha^{(i)}=0$ for some $i\in \mathbb{N}$. Then $\alpha_i=\alpha^{(i)}$ and we obtain an equality in the initial case of the induction proof. Together with the assumption of uniformly distributed null $p$-values, all inequalities of the induction proof become equations and we obtain that the probability of rejecting any hypothesis is exactly $\alpha$. Now consider the case where $\alpha^{(i)}>0$ for all $i\in \mathbb{N}$ and let $\delta >0$ be arbitrary. Since we assume $\sum_{i\in \mathbb{N}} \alpha_i \frac{1-\alpha^{(i)}}{\tau_i-\lambda_i}(S_i-C_i)=\alpha$, and, by the exhaustive algorithm
$$
\alpha^{(i)}=\alpha-\sum_{j=1}^{i-1} \alpha_j \frac{1-\alpha^{(j)}}{\tau_j-\lambda_j}(S_j-C_j),
$$
we obtain
$\alpha^{(i)}\to 0$ for $i\to \infty$. Hence, there exists an $i\in \mathbb{N}$ such that $\alpha_i \geq 0 \geq  \alpha^{(i)}-\delta$. Thus,$\mathbb{P} (R_{i:i}>0|\mathcal{G}_{i-1}, I_0=\mathbb{N})= \alpha_i\geq \alpha^{(i)} - \delta$. With an analogous induction proof we obtain $\mathbb{P} (R_{j:i}>0|\mathcal{G}_{j-1}, I_0=\mathbb{N})\geq \alpha^{(j)} - \delta$ for all $j\leq i$ and hence  $\mathbb{P}(V(i)>0| I_0=\mathbb{N})\geq \alpha - \delta$. Since $\delta >0$ was arbitrary, $\mathbb{P}(V(i)>0| I_0=\mathbb{N})\to \alpha$ for $i \to \infty$.

\end{proof}

\begin{proof}[Proof of Theorem \ref{theo:uniform_improvement}]
    Together with Proposition \ref{prop} and \eqref{eq:assump}, the Theorem follows by Proposition 3.8 in \cite{fischer2022online}. However, we also provide a self-contained proof: Suppose there exists a procedure $(\tilde{\alpha}_i)_{i\in \mathbb{N}}$ with the  property stated in the Theorem. Then, we have
    \begin{align*}\mathbb{P}(\{\exists i\in \mathbb{N}: P_i\leq \tilde{\alpha}_i\}|I_0=\mathbb{N})&=\mathbb{P}(\{\exists i\in \mathbb{N}: P_i\leq \tilde{\alpha}_i\}\setminus \{\exists i\in \mathbb{N}: P_i\leq \alpha_i\}|I_0=\mathbb{N}) \\
    &+\mathbb{P}(\{\exists i\in \mathbb{N}: P_i\leq \tilde{\alpha}_i\} \cap \{\exists i\in \mathbb{N}: P_i\leq \alpha_i\}|I_0=\mathbb{N}) \\
    &> \mathbb{P}(\{\exists i\in \mathbb{N}: P_i\leq \alpha_i\}|I_0=\mathbb{N})= \alpha, 
    \end{align*}
    which contradicts the assumption that $(\tilde{\alpha}_i)_{i\in \mathbb{N}}$ controls the FWER. The inequality follows by \eqref{eq:assump} and the last equality by Proposition \ref{prop}.
\end{proof}

\subsection*{Graphical representation of the ADDIS-Graph}

\cite{BWBP} proposed the representation of Bonferroni-based closed procedures as directed graphs, where the hypotheses are nodes, connected by weighted vertices illustrating the distribution of the significance level in case of a rejection. Such graphical procedures are becoming increasingly popular, since they make the calculation of individual significance levels easy to follow and thus facilitate communication with users. \cite{fischer2023adaptive} showed that the ADDIS-Graph can be represented in a similar way. 

For this, it makes sense to first define $$\tilde{\alpha}_i=\frac{\alpha_i}{\tau_i-\lambda_i}=\alpha \gamma_i + \sum_{j=1}^{i-1} g_{j,i} (C_j-S_j+1) \tilde{\alpha}_j,$$
where $\alpha_i$ denotes the individual significance level of the ADDIS-Graph.
Note that the levels $\tilde{\alpha}_i$, $i\in \mathbb{N}$, can also be obtained by the following recursion formula:
\begin{enumerate}
    \item Set $\tilde{\alpha}_i=\alpha \gamma_i$ for all $i\in \mathbb{N}$ and $j=1$.
    \item If $P_j\leq \lambda_j$ or $P_j> \tau_j$, update the levels $\tilde{\alpha}_i$, $i>j$, as follows: $\tilde{\alpha}_i\to\tilde{\alpha}_i+g_{j,i}\tilde{\alpha}_j$. \label{2}
    \item Update $j\to j+1$ and go to step \ref{2}.
\end{enumerate}
In Figure \ref{fig:graph}, we used this to represent the levels $(\tilde{\alpha}_i)$ as a graph. The initial levels are illustrated below the hypotheses. In case of $P_j>\tau_j$ or $P_j\leq \lambda_j$, the level $\tilde{\alpha}_j$ is distributed to the future hypotheses according to the weights $(g_{j,i})_{j\in \mathbb{N},i>j}$. After applying this graph, the significance level of the ADDIS-Graph can be obtained by taking $\alpha_i=\tilde{\alpha}_i(\tau_i-\lambda_i)$.

\begin{figure}[htbp]
	\begin{center}
			\centering
   \includegraphics[width=16.5cm,height=5.5cm,keepaspectratio]{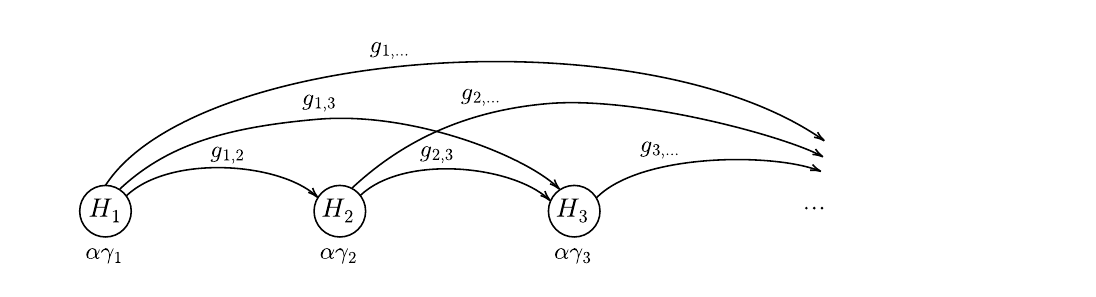}
	\end{center}
	\caption{Graphical representation of the levels $(\tilde{\alpha}_i)_{i\in \mathbb{N}}$, where $\tilde{\alpha}_i=\alpha_i/(\tau_i-\lambda_i)$ and $\alpha_i$ is the level of the ADDIS-Graph. The initial levels are illustrated below each node. In case of $P_i>\tau_i$ or $P_i\leq \lambda_i$, the future significance levels are updated according to the weights $(g_{j,i})_{j\in \mathbb{N}, i>j}$. \label{fig:graph}}
\end{figure}

\section*{Acknowledgements}
 The authors are grateful for the valuable comments of two anonymous referees.

\section*{Funding} 
L. Fischer acknowledges funding by the Deutsche Forschungsgemeinschaft (DFG, German Research Foundation) – Project number 281474342/GRK2224/2.

M. Bofill Roig is a member of the EU Patient-centric clinical trial platform (EU-PEARL). EU-PEARL has received funding from the Innovative Medicines Initiative 2 Joint Undertaking under grant agreement No 853966. This Joint Undertaking receives support from the European Union's Horizon 2020 research and innovation programme and EFPIA and Children's Tumor Foundation, Global Alliance for TB Drug Development non-profit organization, Spring- works Therapeutics Inc. This publication reflects the author's views. Neither IMI nor the European Union, EFPIA, or any Associated Partners are responsible for any use that may be made of the information contained herein.

\bibliographystyle{apalike}  
\bibliography{Bibliography-MM-MC.bib}

\end{document}